\documentclass[leqno,12pt]{amsart}
\usepackage{amscd} 
\usepackage{graphicx} 
\usepackage{tikz}
\usepackage{tikz-cd} 
\usepackage[margin=34mm]{geometry}
\usepackage{blaomnote}
\usepackage{hyperref}
\begin{document}
\bibliographystyle{plain}
\title%
{A geometer's view of the the Cram\'er-Rao bound on estimator
  variance} \author{Anthony D.~Blaom}
\address{E-mail: {\tt anthony.blaom@gmail.com}}%
\thispagestyle{empty}
\begin{abstract} 
  The classical Cram\'er-Rao inequality gives a lower bound for the
  variance of a unbiased estimator of an unknown parameter, in some
  statistical model of a random process. In this note we rewrite the
  statment and proof of the bound using contemporary geometric
  language.
\end{abstract}
\maketitle

The Cram\'er-Rao inequality gives a lower bound for the variance of a
unbiased estimator of a parameter in some statistical model of a
random process. Below is a restatement and proof in sympathy with the
underlying geometry the problem. While our presentation is mildly
novel, its mathematical content is very well-known. 

Assuming some very basic familiarity with Riemannian geometry, and
that one has reformulated the bound appropriately, the essential parts
of the proof boil down to half a dozen lines. For completeness we
explain the connection with log-likelihoods, and show how to recover
the more usual statement in terms of the Fisher information matrix.
We thank Jakob Str\"ohl for helpful feedback.

\section{The Cram\'er-Rao inequality}\label{_1}
The mathematical setting of statistical inference consists of: (i) a
smooth\footnote{In this note `smooth' means $C^2$.} manifold
${\mathcal X}$, the {\df sample space}, which we will suppose is
finite-dimensional; and (ii) a set ${\mathcal P}$ of probability
measures on ${\mathcal X} $, called the {\df space of models} or {\df
  parameters}. The objective is to make inferences about an unknown
model $p \in {\mathcal P} $, given one or more observations
$x \in {\mathcal X} $, drawn at random from ${\mathcal X} $ according
to $p$.

Under certain regularity assumptions detailed below, this data
suffices to make ${\mathcal X} $ into a Riemannian manifold, whose
geometric properties are related to problems of statistical
inference. It seems that Calyampudi Radhakrishna Rao was the first to
articulate this connection between geometry and statistics \cite{Rao_45}.

In formulating the Cram\'er-Rao inequality, we suppose that
${\mathcal P} $ is a smooth {\em finite}-dimensional manifold (i.e.,
we are doing so-called parametric inference). We say that
${\mathcal P} $ is {\df regular} if the probability measures
$p \in {\mathcal P} $ are all Borel measures on ${\mathcal X} $, and
if there exists some positive Borel measure $\mu$ on ${\mathcal X} $,
hereafter called a {\df reference measure}, such that
\begin{equation}
  p = f_p\,\mu, \label{hh}
\end{equation} 
for some collection of smooth functions $f_p$, $p \in {\mathcal P} $,
on ${\mathcal X} $. The definition of regularity furthermore requires
that we may arrange  $(x,p)\mapsto f_p(x)$ to be jointly smooth.

An {\df unbiased estimator} of some smooth function
$\theta \colon {\mathcal P} \rightarrow {\mathbb R} $ (the
``parameter'') is a smooth function
$\hat \theta \colon {\mathcal X} \rightarrow {\mathbb R} $ whose
expectation under each $p \in {\mathcal P} $ is precisely $\theta(p)$:
\begin{equation}
  \theta(p) = {\mathbb E} (\hat \theta\,|\,p) := \int_{x \in {\mathcal X}} \hat \theta
  (x)\,dp\,; \qquad p \in {\mathcal P}.\label{kk}
\end{equation}
\noindent%

\begin{theorem}[Rao-Cram\'er \cite{Rao_45,Cramer_46}]
  The space of models ${\mathcal P} $ determines a natural Riemannian
  metric on ${\mathcal X} $, known as the Fisher-Rao metric, with
  respect to which there is the following lower bound on the variance
  of an unbiased estimator $\hat \theta $ of $\theta $:
  \begin{equation}
     \label{eq:1}
     {\mathbb V}(\hat \theta\,|\,p) \ge |\nabla \theta (p)|^2\,; \qquad p \in {\mathcal P}.
   \end{equation}
\end{theorem}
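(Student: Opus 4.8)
The plan is to read every object in the statement inside a single Hilbert space attached to the distinguished model $p$. Let $L^2_0(p)$ be the space of $p$-square-integrable functions on ${\mathcal X}$ with vanishing $p$-expectation, equipped with the inner product $\langle u,v\rangle_p := {\mathbb E}(uv\mid p)$, which on mean-zero functions is simply the covariance under $p$. Then ${\mathbb V}(\hat\theta\mid p) = \|\hat\theta - \theta(p)\|_p^2$, so \eqref{eq:1} becomes an assertion about the length of the one vector $\hat\theta-\theta(p)$, and the whole argument will consist of projecting that vector onto a distinguished subspace.

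First I would extract the two analytic inputs by differentiating the normalizations that regularity renders legitimate. Writing $\ell_p := \log f_p$ for the log-likelihood and, for a tangent vector $v\in T_p{\mathcal P}$, writing $v(\ell_p)$ for the derivative of $p\mapsto \ell_p$ in the direction $v$ (the \emph{score}), differentiation of $\int f_p\,d\mu = 1$ yields ${\mathbb E}(v(\ell_p)\mid p)=0$, so scores lie in $L^2_0(p)$; and differentiation of the unbiasedness identity \eqref{kk} yields the key formula
\[
  d\theta_p(v) \;=\; {\mathbb E}\bigl(\hat\theta\, v(\ell_p)\mid p\bigr) \;=\; \bigl\langle \hat\theta-\theta(p),\; v(\ell_p)\bigr\rangle_p ,
\]
the last step using that the score is centered. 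Next I would define the Fisher-Rao metric on ${\mathcal P}$ as the pullback of $\langle\cdot,\cdot\rangle_p$ along the score map $\iota_p\colon T_p{\mathcal P}\to L^2_0(p)$, $v\mapsto v(\ell_p)$; when this is nondegenerate, $\iota_p$ is an isometric embedding onto the \emph{score subspace} $V_p := \iota_p(T_p{\mathcal P})$. The formula above then says exactly that $\iota_p\bigl(\nabla\theta(p)\bigr) = \operatorname{proj}_{V_p}\bigl(\hat\theta-\theta(p)\bigr)$, because both of these vectors of $V_p$ have inner product $d\theta_p(v)$ against $\iota_p v$ for every $v$.

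The proof then closes in two lines: orthogonal projection does not increase length and $\iota_p$ preserves it, whence
\[
  |\nabla\theta(p)|^2 \;=\; \bigl\|\iota_p(\nabla\theta(p))\bigr\|_p^2 \;=\; \bigl\|\operatorname{proj}_{V_p}(\hat\theta-\theta(p))\bigr\|_p^2 \;\le\; \bigl\|\hat\theta-\theta(p)\bigr\|_p^2 \;=\; {\mathbb V}(\hat\theta\mid p),
\]
which is \eqref{eq:1}, with equality precisely when $\hat\theta - \theta(p)$ is itself a score (the classical efficiency condition).

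I expect the one genuine obstacle to be justifying the two differentiations under the integral sign; this is where the regularity hypotheses --- joint smoothness of $(x,p)\mapsto f_p(x)$ together with the reference measure $\mu$ --- actually do their work, supplying a local dominating function for a dominated-convergence argument. Two smaller points I would address in passing: if $\iota_p$ has a kernel the Fisher-Rao form is merely positive semidefinite, but the key formula shows $d\theta_p$ annihilates $\ker\iota_p$, so the argument descends unchanged to $T_p{\mathcal P}/\ker\iota_p$ (equivalently, one reads $|\nabla\theta(p)|^2$ off a Moore-Penrose inverse); and recovering the textbook form $|\nabla\theta(p)|^2 = \sum_{i,j} I^{ij}(p)\,\partial_i\theta(p)\,\partial_j\theta(p)$ is just the coordinate expression of the inverse metric, with $I(p) = [\,{\mathbb E}(\partial_i\ell_p\,\partial_j\ell_p\mid p)\,]$ the Fisher information matrix.
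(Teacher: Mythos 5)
Your proof is correct, and its analytic core coincides with the paper's: you differentiate $\int f_p\,d\mu=1$ to obtain the centered score (the lemma of Section 2, with $v(\ell_p)=\lambda_x(v)$ in the paper's notation), differentiate the unbiasedness identity to obtain $d\theta_p(v)=\langle\hat\theta-\theta(p),\,v(\ell_p)\rangle_p$ (the paper's Proposition), and take the Fisher--Rao metric to be the $L^2(p)$ inner product of scores. The only divergence is the closing step. The paper pairs $\hat\theta-\theta(p)$ against the single function $x\mapsto\lambda_x(\nabla\theta(p))$ and applies Cauchy--Schwarz, arriving at $|\nabla\theta(p)|^2\le\sqrt{\mathbb{V}(\hat\theta\,|\,p)}\;|\nabla\theta(p)|$ and cancelling a factor. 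You instead identify $\iota_p(\nabla\theta(p))$ with the orthogonal projection of $\hat\theta-\theta(p)$ onto the full score subspace $V_p=\iota_p(T_p{\mathcal P})$ and use the fact that projections do not increase norm; this identification is correctly justified, since both vectors lie in $V_p$ and pair to $d\theta_p(v)$ against every $\iota_p v$ --- one by the definition of the gradient together with the isometry of $\iota_p$, the other by the differentiated unbiasedness identity. The two closings are essentially equivalent (Cauchy--Schwarz is exactly the statement that projection onto the line spanned by $\iota_p(\nabla\theta(p))$ does not increase length), but your packaging buys two things the paper does not state: the equality case --- efficiency precisely when $\hat\theta-\theta(p)$ is itself a score --- and an honest treatment of possible degeneracy of $\mathbb{I}$, which the paper silently assumes away by calling $\mathbb{I}$ a metric. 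Your remark that the differentiations under the integral sign are the genuine analytic burden is also apt; the paper performs them without comment.
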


\noindent%
More informally: The parameter space ${\mathcal P}$ comes equipped
with a natural way of measuring distances, leading to a well-defined
notion of steepest rate of ascent, for any function $\theta $ on
${\mathcal P} $. The square of this rate is precisely the lower bound
for the variance of an unbiased estimator $\hat\theta $.

\section{Observation-dependent one-forms on the space of
  models}\label{s2}
It is fundamental to the present geometric point of view that each
observation $x \in {\mathcal X} $ determines a one-form $\lambda_x$ on
the space ${\mathcal P} $ of models in the following way: Let $v \in T_{p_0}{\mathcal P} $
be a tangent vector, understood as the derivative of some path
$t \mapsto p_t \in {\mathcal P} $ through $p_0$:
\begin{equation}
  v = \frac{d}{dt} p_t \Big|_{t=0}.\label{gg}
\end{equation}
Then, recalling that each $p_t$ is a probability measure on
${\mathcal X} $ (and ${\mathcal P} $ is regular) we may write
$p_t=g_tp_0$, for some smooth function
$g_t \colon {\mathcal X} \rightarrow {\mathbb R} $, and define
\begin{equation*}
  \lambda_x(v) = \frac{d}{dt} g_t(x)\Big|_{t=0}.
\end{equation*}
The proof of the following is straightforward:
\begin{lemma}
  ${\mathbb E}(\lambda_x(v)\,|\,p) = 0$ for all $p \in {\mathcal P}$ and
  $v \in T_p {\mathcal P}$.
\end{lemma}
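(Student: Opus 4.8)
The plan is to unwind the definitions and reduce the claim to differentiating, under the integral sign, the constant function $t \mapsto p_t({\mathcal X}) \equiv 1$. First I would fix $p \in {\mathcal P}$, a tangent vector $v \in T_p{\mathcal P}$, and a path $t \mapsto p_t$ representing $v$ as in~\eqref{gg}, with $p_0 = p$. Writing $p_t = g_t p_0$ as in the text, the quantity to be computed is
\begin{equation*}
  {\mathbb E}(\lambda_x(v)\,|\,p) = \int_{x \in {\mathcal X}} \frac{d}{dt}\, g_t(x)\Big|_{t=0}\; dp_0.
\end{equation*}

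Next I would observe that, since each $p_t$ is a probability measure and $p_t = g_t p_0$,
\begin{equation*}
  \int_{x \in {\mathcal X}} g_t(x)\; dp_0 = p_t({\mathcal X}) = 1 \qquad \text{for all } t,
\end{equation*}
so the left-hand side is a constant function of $t$. Differentiating this identity at $t = 0$ and interchanging $\tfrac{d}{dt}$ with $\int$ yields precisely ${\mathbb E}(\lambda_x(v)\,|\,p) = 0$. To justify the interchange I would pass to the reference measure: if $p_0 = f_{p_0}\mu$ then $g_t = f_{p_t}/f_{p_0}$ $\mu$-almost everywhere on the support of $p_0$, so
\begin{equation*}
  \frac{d}{dt}\, g_t(x)\Big|_{t=0} = \frac{1}{f_{p_0}(x)}\,\frac{d}{dt}\, f_{p_t}(x)\Big|_{t=0},
\end{equation*}
and the joint smoothness of $(x,p)\mapsto f_p(x)$ guaranteed by regularity, together with finite-dimensionality of ${\mathcal X}$, lets one bound the relevant difference quotients by a locally integrable function and invoke dominated convergence. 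The same formula also shows that $\frac{d}{dt} g_t(x)|_{t=0}$ depends on the chosen path only through $v$ and is linear in $v$, so that $\lambda_x$ is genuinely a well-defined one-form on ${\mathcal P}$ --- a point worth recording even though it is not part of the stated conclusion.

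The only real obstacle is the analytic justification of differentiation under the integral sign; everything else is bookkeeping. For a clean exposition I would either fold into the definition of regularity a standing hypothesis supplying a local integrable dominating function for these difference quotients, or simply remark that such domination is automatic in the situations of practical interest --- for instance when $f_p$ and its first ${\mathcal P}$-derivatives are locally bounded uniformly in $p$ and $\mu$ is locally finite. Granting the interchange, the proof collapses to the two displayed identities above.
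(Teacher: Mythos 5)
Your argument is correct and is precisely the ``straightforward'' proof the paper has in mind but does not write out: differentiate the identity $\int_{\mathcal X} g_t\,dp_0 = p_t({\mathcal X}) = 1$ at $t=0$ and interchange derivative and integral. Your added remarks on justifying the interchange and on the well-definedness of $\lambda_x$ go beyond what the paper records, but they are sound and do not change the approach.
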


Now if $v \in T_{p_0} {\mathcal P} $ is a tangent vector as in
\eqref{gg}, and if \eqref{kk} holds, then
\begin{equation*}
  d \theta (v)=\frac{d}{dt}\int_{x \in {\mathcal X}}\hat \theta
  (x)\,d{p_t}\,\Big|_{t=0} =\frac{d}{dt}\int_{x \in {\mathcal X}}\hat \theta
  (x)g_t(x)\,d{p_0}\,\Big|_{t=0} = \int_{x \in {\mathcal X}}\hat
  \theta(x) \lambda_x(v)\,d{p_0},
\end{equation*}
giving us:
\begin{proposition}
  For any unbiased estimator
  $\hat \theta \colon {\mathcal X} \rightarrow {\mathbb R} $ of
  $\theta \colon {\mathcal P} \rightarrow {\mathbb R} $, one has
  \begin{equation*}
    d \theta (v) = \int_{x \in {\mathcal X}}\hat
    \theta(x) \lambda_x(v)\,d{p};\qquad v \in T_p {\mathcal P}.
  \end{equation*}
\end{proposition}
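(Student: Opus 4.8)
The plan is to verify directly the chain of equalities displayed just above the statement; the only genuine content is the legitimacy of differentiating under the integral sign. Fix $p = p_0 \in {\mathcal P}$ and $v \in T_{p_0}{\mathcal P}$, and choose a smooth path $t \mapsto p_t$ through $p_0$ with $v = \frac{d}{dt} p_t\big|_{t=0}$ as in \eqref{gg}. Because the one-form $\lambda_x$ is well-defined independently of the path representing $v$, we are free to compute $\lambda_x(v)$ along this same path, so that $\lambda_x(v) = \frac{d}{dt} g_t(x)\big|_{t=0}$ with $p_t = g_t p_0$.

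First I would apply the unbiasedness identity \eqref{kk} at each $p_t$ to write $\theta(p_t) = \int_{{\mathcal X}} \hat\theta(x)\,dp_t$, and then, since $\theta$ is smooth, differentiate at $t=0$ to obtain $d\theta(v) = \frac{d}{dt}\int_{{\mathcal X}} \hat\theta\,dp_t\big|_{t=0}$. Next I would use regularity to replace the varying measure $p_t$ by the fixed measure $p_0$: the substitution $p_t = g_t p_0$ turns the integral into $\int_{{\mathcal X}} \hat\theta(x) g_t(x)\,dp_0$, whose integrand depends on $t$ only through the scalar factor $g_t(x)$. Differentiating under the integral and invoking $\lambda_x(v) = \frac{d}{dt} g_t(x)\big|_{t=0}$ then yields the asserted formula $d\theta(v) = \int_{{\mathcal X}} \hat\theta(x)\lambda_x(v)\,dp_0$.

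The one step that is not purely formal — and the main obstacle — is justifying the interchange of $\frac{d}{dt}$ and $\int_{{\mathcal X}}$. I would handle this with dominated convergence: by the regularity hypothesis, $(x,p)\mapsto f_p(x)$ is jointly smooth, so $g_t(x) = f_{p_t}(x)/f_{p_0}(x)$ is $C^1$ in $t$ on the support of $p_0$, the difference quotients $(g_t(x)-g_0(x))/t$ converge pointwise to $\lambda_x(v)$ as $t\to 0$, and for $t$ near $0$ they are controlled by an integrable dominating function; producing such a dominating function is exactly what the background regularity assumptions on ${\mathcal P}$ (together with the finiteness of the relevant moments of $\hat\theta$, implicit in speaking of its variance) are meant to supply. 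Granting this, the three equalities combine to give the Proposition. I would also remark that only unbiasedness and integrability of $\hat\theta$ are used, and that taking $\hat\theta \equiv 1$ (an unbiased estimator of the constant function $\theta \equiv 1$, for which $d\theta = 0$) recovers the preceding Lemma as a special case.
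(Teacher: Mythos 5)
Your proof is correct and follows essentially the same route as the paper: differentiate the unbiasedness identity along a path representing $v$, rewrite $dp_t$ as $g_t\,dp_0$, and exchange $\frac{d}{dt}$ with the integral to recognise $\lambda_x(v)$. Your extra attention to justifying the interchange by dominated convergence is a detail the paper silently absorbs into its regularity hypotheses, so it is a welcome but not divergent addition.
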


\section{Log-likelihoods}\label{s3}
As an aside, we shall now see that the observation-dependent one-forms
$\lambda_x$ are exact, and at the same time give their more usual
interpretation in terms of log-likelihoods.

Choosing a reference measure $\mu $, and defining $f_p$ as in
\eqref{hh}, one defines the {\df log-likelihood} function
$(x, p) \mapsto L_x(p) \colon {\mathcal X} \times {\mathcal P}
\rightarrow {\mathbb R} $ by
\begin{equation*}
  L_x(p) = \log f_p(x).
\end{equation*}
While the log-likelihood depends on the reference measure $\mu$, its
derivative $dL_x$ (a one-form on ${\mathcal P} $) does not, for in
fact:
\begin{lemma}
  $ dL_x = \lambda_x$.
\end{lemma}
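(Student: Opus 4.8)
The plan is to show that both $dL_x$ and $\lambda_x$ arise as derivatives of the same kind of object, by directly unwinding the definitions. Fix $p_0 \in {\mathcal P}$ and a tangent vector $v \in T_{p_0}{\mathcal P}$ represented by a path $t \mapsto p_t$ as in \eqref{gg}. On the one hand, writing $p_t = f_{p_t}\mu$ gives $L_x(p_t) = \log f_{p_t}(x)$, so
\[
  dL_x(v) = \frac{d}{dt}\log f_{p_t}(x)\Big|_{t=0}
          = \frac{1}{f_{p_0}(x)}\,\frac{d}{dt} f_{p_t}(x)\Big|_{t=0}.
\]
On the other hand, by definition $\lambda_x(v) = \frac{d}{dt} g_t(x)\big|_{t=0}$, where $g_t$ is characterized by $p_t = g_t p_0$. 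The first step is therefore to relate $g_t$ to the $f_{p_t}$: since $p_t = f_{p_t}\mu$ and $p_0 = f_{p_0}\mu$, we have $g_t = f_{p_t}/f_{p_0}$ (wherever $f_{p_0} \neq 0$, which is where it matters, $p_0$ being a probability measure).

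Given that identity, the second step is just to differentiate: $g_0 \equiv 1$, so by the quotient rule
\[
  \frac{d}{dt} g_t(x)\Big|_{t=0}
    = \frac{1}{f_{p_0}(x)}\,\frac{d}{dt} f_{p_t}(x)\Big|_{t=0},
\]
which is exactly the expression computed above for $dL_x(v)$. Since $v$ was arbitrary, $dL_x = \lambda_x$ as one-forms on ${\mathcal P}$. As a sanity check, this also makes transparent why $\lambda_x$ is exact and why it does not depend on the choice of reference measure $\mu$: rescaling $\mu$ multiplies every $f_p$ by the same nowhere-vanishing function of $x$, which shifts $L_x$ by a constant (in $p$) and hence leaves $dL_x$ untouched — consistent with $\lambda_x$ having been defined without reference to $\mu$ at all.

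The only genuine subtlety — the step I expect to need the most care — is the well-definedness of $g_t$ and the interchange of $d/dt$ with the pointwise algebra near points where $f_{p_0}(x)$ is small or zero. Strictly, $g_t$ is only a Radon–Nikodym derivative $dp_t/dp_0$, defined $p_0$-almost everywhere; but the regularity hypothesis ($p = f_p\mu$ with $(x,p) \mapsto f_p(x)$ jointly smooth) lets us take the honest representative $g_t = f_{p_t}/f_{p_0}$ on the set $\{f_{p_0} > 0\}$, and since $p_0$ assigns no mass to the complement, this is all that is needed for the one-form $\lambda_x$ (which is really an a.e.-defined object, as already implicit in the preceding sections). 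Joint smoothness in $(t,x)$ also justifies differentiating $f_{p_t}(x)$ in $t$ at fixed $x$. With that bookkeeping in place the lemma is immediate.
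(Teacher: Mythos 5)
Your proof is correct and follows essentially the same route as the paper's: both hinge on the identity $g_t = f_{p_t}/f_{p_0}$ and then differentiate at $t=0$, you via the chain rule on $\log f_{p_t}(x)$ and the paper by rewriting $f_{p_t}(x)$ as $e^{L_x(p_t)}$. The extra remarks on well-definedness of $g_t$ where $f_{p_0}$ vanishes are a reasonable bonus but do not change the argument.
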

\begin{proof}
  With a reference measure fixed as in \eqref{hh}, we have, along a
  path  $t \mapsto p_t$, $p_t = g_t p_0$, where $g_t =
  f_{p_t}/f_{p_0}$. Applying the definition of $\lambda_x$, we compute
  \begin{equation*}
    \lambda_x \Big(\,\frac{d}{dt} p_t\Big|_{t=0}\,\Big) = 
    \frac{d}{dt}\frac{f_{p_t}(x)}{f_{p_0}(x)}\Big|_{t=0} =
    \frac{d}{dt}\frac{e^{L_x(p_t)}}{e^{L_x(p_0)}}\Big|_{t=0} = dL_x
    \Big(\, \frac{d}{dt} p_t\Big|_{t=0}\,\Big).
  \end{equation*}
\end{proof}
\noindent%
In particular, local maxima of $L_x$ (points of so-called maximum
likelihood) do not depend on the reference measure.

\section{The metric and derivation of the bound}
With the observation-dependent one-forms in hand, we may now define
the Fisher-Rao Riemannian metric on ${\mathcal P} $. It is given by
\begin{equation*}
  {\mathbb I}(u,v)=\int_{x \in {\mathcal
      X}}\lambda_x(u)\lambda_x(v)\,dp\,, \text{for $u,v \in
    T_p {\mathcal P}$.}
\end{equation*}
Now that we have a metric, it is natural to consider $\nabla \theta $
instead of $d \theta $ in Proposition \ref{s2}. By the definition of
gradient, we have
\begin{equation*}
  |\nabla \theta(p)|^2 = d \theta (\nabla \theta (p)).
\end{equation*}
This equation and Proposition \ref{s2} now gives, for any
$v \in T_p {\mathcal P} $,
\begin{equation*}
  |\nabla \theta(p)|^2 = \int_{x \in {\mathcal X}}\hat \theta (x)
  \lambda_x(\nabla \theta (p))\,d{p}
  =\int (\hat\theta(x) - \theta(p)) \lambda_x (\nabla \theta
  (p))\,d{p}.
\end{equation*}
The second equality holds because
$\int_{x \in {\mathcal X}} \lambda_x(\nabla \theta(p))\,d{p}=0$, by
Lemma \ref{s2}. Applying the Cauchy-Schwartz inequality to the
right-hand side gives
\begin{align*}
  |\nabla \theta(p)|^2 &\le \left(\int_{x \in {\mathcal
                         X}}(\hat\theta(x) - \theta(p))^2\,d{p}\right)^{1/2}\left(\int_{x \in
                         {\mathcal X}}\lambda_x(\nabla \theta (p))\lambda_x(\nabla \theta (p))\,d{p}\right)^{1/2}\\
                       &= \sqrt{{\mathbb V}(\hat\theta\,|\,p))}\, \sqrt{{\mathbb I}(\nabla \theta (p),\nabla
                         \theta (p))} = \sqrt{{\mathbb V}(\hat\theta\,|\,p))}\enspace|\nabla \theta(p)|.
\end{align*}
The Cram\'er-Rao bound now follows.

\section{The bound in terms of Fisher information}
Theorem \ref{_1} is coordinate-free formulation. To recover the more
usual statement of the Cram\'er-Rao bound, let $\phi_1,\ldots,\phi_k$
be local coordinates on ${\mathcal P} $, the space of models on
${\mathcal X} $, and
$\frac{\partial }{\partial \phi_1}, \ldots, \frac{\partial }{\partial
  \phi_k}$ the corresponding vector fields on ${\mathcal P} $,
characterised by
\begin{equation*}
  d \phi_i \left(\frac{\partial }{ \partial \phi_j}\right)=\delta_i^j.
\end{equation*}
Here $\delta_i^j=1$ if $i=j$ and is zero otherwise. Applying Lemma
\ref{s3}, the coordinate representation $I_{ij}$ of the Fisher-Rao
metric ${\mathbb I}$ is given by
\begin{align*}
  I_{ij}(p)={\mathbb I}\Big(\, \frac{\partial }{\partial \phi_i}(p), 
  \frac{\partial }{\partial \phi_j}(p)\,\Big)
  &= \int_{x \in {\mathcal X}}d \lambda_x\left(\frac{\partial
    }{\partial \phi_i}(p)\right) d \lambda_x\left(\frac{\partial
    }{\partial \phi_j}(p)\right)\, d{p}\\ 
  &= \int_{x \in {\mathcal X}}\left(\frac{\partial L_x
    }{\partial \phi_i}(p)\right)\left(\frac{\partial L_x
    }{\partial \phi_j}(p)\right)\, d{p}, 
\end{align*}
where $L_x(p)=\log f_p(x)$ is the log-likelihood. In statistics
$I_{ij}$ is known as the {\df Fisher information matrix}.

For the moment we continue to let $\theta $ denote an arbitrary
function on ${\mathcal P} $, and $\hat \theta $ an unbiased
estimate. Now $\nabla \theta $ is the gradient of $\theta $, with
respect to the metric ${\mathbb I} $. Since the coordinate representation of the metric is
$I_{ij}$, a standard computation gives the local coordinate formula
\begin{equation*}
  \nabla \theta  =  \sum_{i,j}I^{ij}\frac{\partial \theta }{\partial
    \phi_i}\frac{\partial }{\partial \phi_j},
\end{equation*}
where $\{I^{ij}\}$ is the inverse of $\{I_{ij}\}$. Regarding the lower
bound in Theorem \ref{_1}, we compute
\begin{align*}
  |\nabla \theta(p)|^2 &= {\mathbb I}(\nabla \theta(p), \nabla \theta
                         (p))
                         =\sum_{i,j,m,n}{\mathbb I}\Big(\,I^{ij}(p) \frac{\partial \theta }{\partial
                         \phi_i}(p)\frac{\partial }{\partial \phi_j}(p)\,,\, I^{mn}(p)
                         \frac{\partial \theta }{\partial
                         \phi_m}(p)\frac{\partial }{\partial \phi_n}(p)\,\Big)\\
                       &=\sum_{i,j,m,n}I^{ij}(p)I^{mn}(p)\frac{\partial
                         \theta }{\partial \phi_i}(p) \frac{\partial \theta }{\partial
                         \phi_m}(p)\,\,{\mathbb I}\Big(\, \frac{\partial }{\partial \phi_j}(p), 
                         \frac{\partial }{\partial \phi_n}(p)\,\Big)\\
                       &=\sum_{i,j,m,n}I^{ij}(p) I_{jn}(p) I^{mn}(p) \frac{\partial
                         \theta }{\partial \phi_i}(p) \frac{\partial \theta }{\partial
                         \phi_m}(p)\\
                       &=\sum_{i,m,n}\delta_i^n I^{mn}(p) \frac{\partial
                         \theta }{\partial \phi_i}(p) \frac{\partial \theta }{\partial
                         \phi_m}(p)= \sum_{i,m} I^{mi}(p)\frac{\partial
                         \theta }{\partial \phi_i}(p) \frac{\partial \theta }{\partial
                         \phi_m}(p).
\end{align*}
Theorem \ref{_1} now reads 
\begin{equation*}
  {\mathbb V}(\hat \theta \,|\, p) \ge \sum_{i,m} I^{mi}(p)\frac{\partial
    \theta }{\partial \phi_i}(p) \frac{\partial \theta }{\partial
    \phi_m}(p).
\end{equation*}

In particular, if we suppose $\theta$ is one of the coordinate
functions, say $\theta  = \phi_j$, then we obtain
\begin{equation*}
  {\mathbb V}(\hat \phi_j \,|\, p) \ge \sum_{i,m} I^{mi}(p)\frac{\partial
    \phi_j }{\partial \phi_i}(p) \frac{\partial \phi_j }{\partial
    \phi_m}(p)=\sum_{i,m}I^{mi}\delta_j^i \delta_j^m = I^{jj}(p),
\end{equation*}
the version of the Cram\'er-Rao bound to be found in statistics
textbooks.

\end{document}